\documentclass[journal]{IEEEtran}
\usepackage[document]{ragged2e}
\usepackage{cite}
\usepackage{amsmath,amssymb,amsfonts}
\usepackage{algorithmic}
\usepackage{enumerate}
\usepackage{comment}
\usepackage{pifont}
\usepackage{mathtools}
\usepackage{chngcntr}
\usepackage{amsthm}
\usepackage{lettrine}
\usepackage{graphicx}
\usepackage{hyperref}
\usepackage{color,soul}
\usepackage{textcomp}
\usepackage{xcolor}
\usepackage{float}
\usepackage[labelsep=period]{caption}
\captionsetup[table]{name=TABLE,labelsep=newline}

\usepackage[utf8]{inputenc}
\usepackage[english]{babel}
\newtheorem{theorem}{Theorem}
\newcommand{\believes}{\mid\equiv}
\newcommand{\sees}{\triangleleft}
\newcommand{\oncesaid}{\mid\sim}
\newcommand{\controls}{\Rightarrow}
\newcommand{\fresh}[1]{\#(#1)}
\newcommand{\combine}[2]{{\langle #1 \rangle}_{#2}}
\newcommand{\encrypt}[2]{{ \{ #1 \} }_{#2}}
\newcommand{\sharekey}[1]{\xleftrightarrow{#1}}
\newcommand{\pubkey}[1]{\xmapsto{#1}}
\newcommand{\secret}[1]{\xleftrightharpoons{#1}}
\DeclareMathAlphabet\mathbfcal{OMS}{cmsy}{b}{n}

\hyphenation{op-tical net-works semi-conduc-tor}

\begin{document}
\title{Secure Lightweight Authentication for Multi User IoT Environment}
\author{Chintan Patel,~\IEEEmembership{Member,~IEEE}
\IEEEcompsocitemizethanks{\IEEEcompsocthanksitem Chintan Patel is a Post doctorate Fellow in University of Sheffield, UK.\protect\\
E-mail: chintan.p592@gmail.com}}
\IEEEtitleabstractindextext{%
\begin{abstract}
\justifying{
The Internet of Things (IoT) is giving a boost to a plethora of new opportunities for the robust and sustainable deployment of cyber-physical systems. The cornerstone of any IoT system is the sensing devices. These sensing devices have considerable resource constraints, including insufficient battery capacity, CPU capability, and physical security. Because of such resource constraints, designing lightweight cryptographic protocols is an opportunity. Remote User Authentication ensures that two parties establish a secure and durable session key. This study presents a lightweight and safe authentication strategy for the user-gateway (U-GW) IoT network model. The proposed system is designed leveraging Elliptic Curve Cryptography (ECC). We undertake a formal security analysis with both the Automated Validation of Internet Security Protocols (AVISPA) and Burrows–Abadi–Needham (BAN) logic tools and an informal security assessment with the Delev-Yao channel. We use publish/subscribe based Message Queuing Telemetry Transport (MQTT) protocol for communication. Additionally, the performance analysis and comparison of security features show that the proposed scheme is resilient to well-known cryptographic threats. }
\end{abstract}
\begin{IEEEkeywords}
Generic IoT, Home Area Network, ECC, Authentication, MQTT
\end{IEEEkeywords}}

\maketitle

\IEEEdisplaynontitleabstractindextext
\IEEEpeerreviewmaketitle
\ifCLASSOPTIONcompsoc
\IEEEraisesectionheading{\section{INTRODUCTION}\label{sec:introduction}}
\else
\section{\textbf{INTRODUCTION}}
\fi
\justifying
\noindent \lettrine[findent=2pt]{\textbf{E}}{ }xpansion of internet-based services significantly impacts routine life of the people. Due to its ambient and easy-to-use support services, the adoption of Internet of Things (IoT) based services (i.e., smart electricity supply, smart agricultural, smart amenity distribution, industry 4.0,  smart healthcare, smart home) has soared in recent years.

According to recent Gartner forecasts, more than 25.44 billion IoT gadgets will be incorporated into the cyber world's working environment by 2030. IoT-based smart health care connects healthcare entities such as patients, doctors, hospital administrators, drug suppliers, ambulances, and pharmacists using a network of smart healthcare devices. Doctors can obtain real-time health data from patients via wearable health devices using an intelligent patient monitoring system.

IoT based \textbf{smart grid} provides a intelligent system for electricity generation, electricity distribution and customer payments \cite{12farhangi}. The smart grid aims to convert current client-server type electricity distribution to peer-to-peer type energy distribution. Using a two-way electricity line, consumers can also sell their self-generated energy to power distribution companies. The smart meter provides real-time monitoring for the energy distribution system \cite{32Zafar}. 

IoT-based \textbf{intelligent and smart agriculture} helps farmers in environment monitoring, soil quality analysis, fertilizer requirement analysis, water distribution control, and so on. Farmers can save water, money, and time by using smart agriculture applications. Farmers can get inputs from agriculture scientists and the government about their fertilizer needs, and new crop decreases using the real-time data generated from the ambient deployment of nano-sensors on farms \cite{4Antonacci}.

An IoT-based \textbf{smart home} provides automation and smart control for their ambient home devices. Smart home users can control home devices from anywhere globally using internet-connected intelligent home appliances. The smart home user can do the security checkup, thermal monitoring, light controlling, washing machine controlling, and many more smart activities using in-hand mobile devices. 

The vast data generated from these IoT-based services is stored in the cloud and is used for generating knowledge through data processing. Thus, IoT provides an opportunity for the real-time monitoring of sensor data for quick decision-making and storage of data for data analysis and futuristic planning. Every IoT based data service pass through the following four basic layers:
\begin{itemize}
    \item The \textbf{\textit{ground layer}} provides the physical deployment of sensing devices and actuator devices on the ground. This layer works as a data generator.
    \item The \textbf{\textit{second layer}} consists of gateway devices that collect data from ground layer sensing devices and deliver this data to the end-user as well as to the cloud for any further processing. 
    \item The \textbf{\textit{third layer}} provides data analysis using machine learning techniques and artificial intelligence-based intelligent decision making.
    \item The \textbf{\textit{fourth layer}} focuses on end-users who collect the data from the third layer as well as the ground layer (based on application) via second layer devices.
\end{itemize}
The IoT deployment comes with many challenges and opportunities. Starting from deploying thousands of tiny and heterogeneous devices on the ground level to data collection, data analysis, and intelligent decision-making forms significant challenges for researchers. 
The recent past surveys by \cite{28Whitemore} show that \textit{standardization, communication protocol designing, data analysis, security, and privacy} are the highly notable challenges in the IoT setup. These surveys also highlight that security is a big concern, among others. Some of the significant security issues are data confidentiality, user privacy, device authentication, physical security, and so on \cite{1alaba}. Followings are two crucial reasons behind the need for a full-proof security mechanism for the IoT system. 
\begin{itemize}
\item Numerous heterogeneous tiny resource constraint devices on the ground level.
\item Privacy of the user's data.  
\end{itemize}

We can divide the IoT devices among three major parts:
\begin{itemize}  
\item \textbf{User devices} in the IoT are a combination of resource constraints and resource-capable devices. Resource constraint devices like wearable devices to resource-capable devices like laptops and mobiles.
\item \textbf{Gateway devices} in the IoT are considered resource-capable devices and can work as intermediary devices between sensing devices and user devices. They provide setup support to the other devices in the IoT system. The IoT user communicates with the gateway device to receive runtime sensing data.
\item \textbf{Sensing devices} in the IoT involve tiny sensors and actuator devices. They are not capable of performing any traditional and non-traditional security mechanisms. Many security papers highlight that sensing devices do cryptographic operations, but it is non-practical in real time due to energy issues and on-time service requirements. e.g., A thermal sensor deployed in the house can not run for the cryptographic operation when you ask for temperature. They transmit data to the home gateway device, and the user receives data from it.   
\end{itemize}
Thus, the communication between sensors, gateway devices, and end-users must be secure enough to fulfill all the critical security goals. The secure authentication mechanism between these devices provides secure key generation for communication and mutual trust among each communicating party. A secure authentication system fulfills most major security goals except some like access control \cite{3ammar}, \cite{35rupa}.

Cryptography is a branch of mathematics that deals with enumerations and executions of cryptosystems. . With the run-up towards smart technology, the need for lightweight cryptography came into the picture. The reason behind this need is the use of numerous resource constraint devices in the deployment of sensor-based IoT application deployments. These resource constraint devices are short of computation memory and storage capability. It is nearly impossible for these devices to perform sixteen rounds of Data Encryption Standard (DES) and exponential computations of the RSA promptly and without higher energy utilization. In 2003, Hankerson et al. highlighted  \textit{Elliptic Curve Cryptography (ECC)}, which is much lighter than traditional crypto methods in computations and storage requirements \cite{13hankerson}. Due to its appealing features such as reduced key sizes, relatively short time requirements, and limited resource utilisation, the ECC became a well-known cryptographic approach for resource constraint devices.  . The \textit{Elliptic Curve Diffie-Hellman encryption (ECDHE)} provides a lighter version of Diffie-Hellman with \textit{Elliptic Curve Discrete Logarithm Problem (ECDLP)}.

\textbf{\textit{Contributions : }}For the U-GW-based network model presented in section \ref{sec:networkmodel}, we provide a secured and coherent two-factor authentication protocol employing a password and a smart card (SC). In the U-GW-based network model, we consider that the IoT application user communicates with the gateway device for receiving the sensor data. Over here, we believe that the sensing device deployed in the local network are tiny devices, and they do not perform security mechanisms. In this model, the IoT user receives sensor data from the gateway device through a precarious channel. We provide a rigorous security analysis of the proposed scheme using globally recognized tools such as BAN Logic and AVISPA. An informal security analysis using widely adopted Dolev-Yao attack model is also provided. We also compare the provided solution to other approaches and analyze performance of proposed work. This examination demonstrates the proposed work's uniqueness, efficiency, and reliability. 

\textbf{\textit{Motivations : }}The strong \textit{motivation} for this paper is the existence of numerous vulnerabilities in the recently proposed schemes. After a thorough literature study, we critically observed that it is challenging to design a lightweight authentication security scheme for sensing-based applications. These devices require an authentication scheme that uses less energy, less time, and lower space for the secure key exchange; hence IoT users can timely receive heterogeneous IoT sensor data from the gateway device. In the proposed scheme, the secure key exchange between the gateway and the IoT user assures secure data transmission of sensing data among IoT user and gateway device over the anxious public internet. Another strong motivation for proposing this work is a synchronous implementation of the presented work. We implemented the proposed scheme using the real-time deployment of sensors and microprocessors (Raspberry-Pi).   

\subsection{Related work}
 \noindent In 1981, Lamport introduced the first RUA scheme based on the hash chain and with the password table at the server-side \cite{19Lamport}. By considering the limitations of password table-based schemes, in 1993, Chang et al. \cite{6Chang} introduced the first Smart Card (SC) based authentication scheme. In the SC-based authentication schemes, the user keeps an SC generated by the service provider as another security feature. Following this work, many other researchers proposed an RUA scheme for the client-server model used on the internet.
 
 In 2009, Das et al. set forth the first two-factor authentication scheme for the wireless sensor network \cite{9das}. In 2010, Khan et al. \cite{14islam} performed cryptanalysis on Das et al.'s scheme and successfully highlighted several vulnerabilities in their system. They underlined that Das et al. technique  is vulnerable to a range of threats, involving node bypassing, a lack of reciprocal authentication, and the likelihood of an insider attack.  Between 2010 to 2013, many authors proposed an authentication scheme, but the continuous fix-brake channel of the authentication brightens up new vulnerabilities and challenges for the two-factor authentication. 
 
 In 2013, Xue et al. \cite{31Xue} put forward a temporal credential-based mutual authentication scheme using a password for the WSN. In 2015, Jiang et al. \cite{15jiang} highlighted vulnerabilities like identity guessing, privilege insider attack, and stolen SC attack inside \cite{31Xue} and also proposed a new RUA scheme. In 2016, Amin et al. \cite{2amin} came up with a lightweight mutual authentication scheme for the WSN architecture. However, in 2017, Wu et al. \cite{29Wu} derived vulnerabilities like sensor capture attack, user forgery attack, gateway forgery attack, and user tracing attack security loopholes in \cite{2amin}. 
 
 In 2017, Jiang et al. \cite{15jiang} proposed a new three-factor scheme after analysis of the scheme proposed by Amin et al. They identified several security loopholes like known session-specific temporary attacks and tracking attacks, side-channel attacks, and impersonation attacks in \cite{2amin}. In 2017, Chen et al. \cite{8Chen} proposed an authentication scheme for the IoT environment and claimed that it is secured from all the well-known attacks, but recently in 2019, Patel et al. \cite{24Patel} provided cryptanalysis for Chen et al.'s scheme and proved that their scheme is not secure enough against attacks such as sensor device anonymity and gateway device bypassing attack. Recently, in 2020, Patel et al. also proposed a lightweight authentication scheme for the same network model \cite{25Patel}. They also highlighted that their proposed scheme is secure and lightweight for the U-GW network model. The proposed network model can be applied for UAV communications \cite{34Wang}, \cite{36Iwendi} also where users securely communicate with UAV devices. 
 
 We limit the related work discussion due to the restricted manuscript size. We suggest readers of this manuscript to refer other ECC based RUA schemes for further references \cite{30Xiong}, \cite{7Chaturvedi}, \cite{20Li}, \cite{25Patel}. 

\subsection{Road map of the paper} The remaining portion of the paper is organized as follows: \noindent The section \ref{sec:2preliminaries} contains the essential preliminaries that are employed throughout this work. We propose an authentication system in section \ref{sec:proposedscheme}. The section \ref{sec:securityanalysis} encompasses formal and informal security analysis which is followed by comparative analysis of the proposed system with existing schemes. The section \ref{sec:implement} provides implementation details. Finally, section \ref{sec:conclusion} summarizes this work by addressing some future aspects of the proposed work. 
\section{Preliminaries}
\label{sec:2preliminaries}
\noindent This section discusses preliminaries such as the Network model, Elliptic curve cryptography (ECC) encryption/decryption, and the threat model. We request readers to follow \cite{25Patel} for the basics of hash function and ECC.  
\subsection{\textbf{Network model}}
\label{sec:networkmodel}
\noindent IoT has started its journey with RFID (Radio-frequency identification) technology. RFID gets attraction due to its properties like being free from the line of sight and acceptable range. Thus, RFID tags on things can provide necessary information about a thing, and internet-connected RFID readers can help monitor and collect data. The IoT-based network models vary from application to application. In this paper, we consider the IoT-based U-GW network model \cite{25Patel}. We can use a smart home network as an example of the U-GW network model, where the user accesses data of all inner deployed sensing devices using the local gateway device. This model is also applicable to other IoT applications (such as smart hospital management), where every registered user receives data from the underlying sensing devices through the gateway device. 
\begin{figure}[H]
    \centering
    \includegraphics[width=\columnwidth]{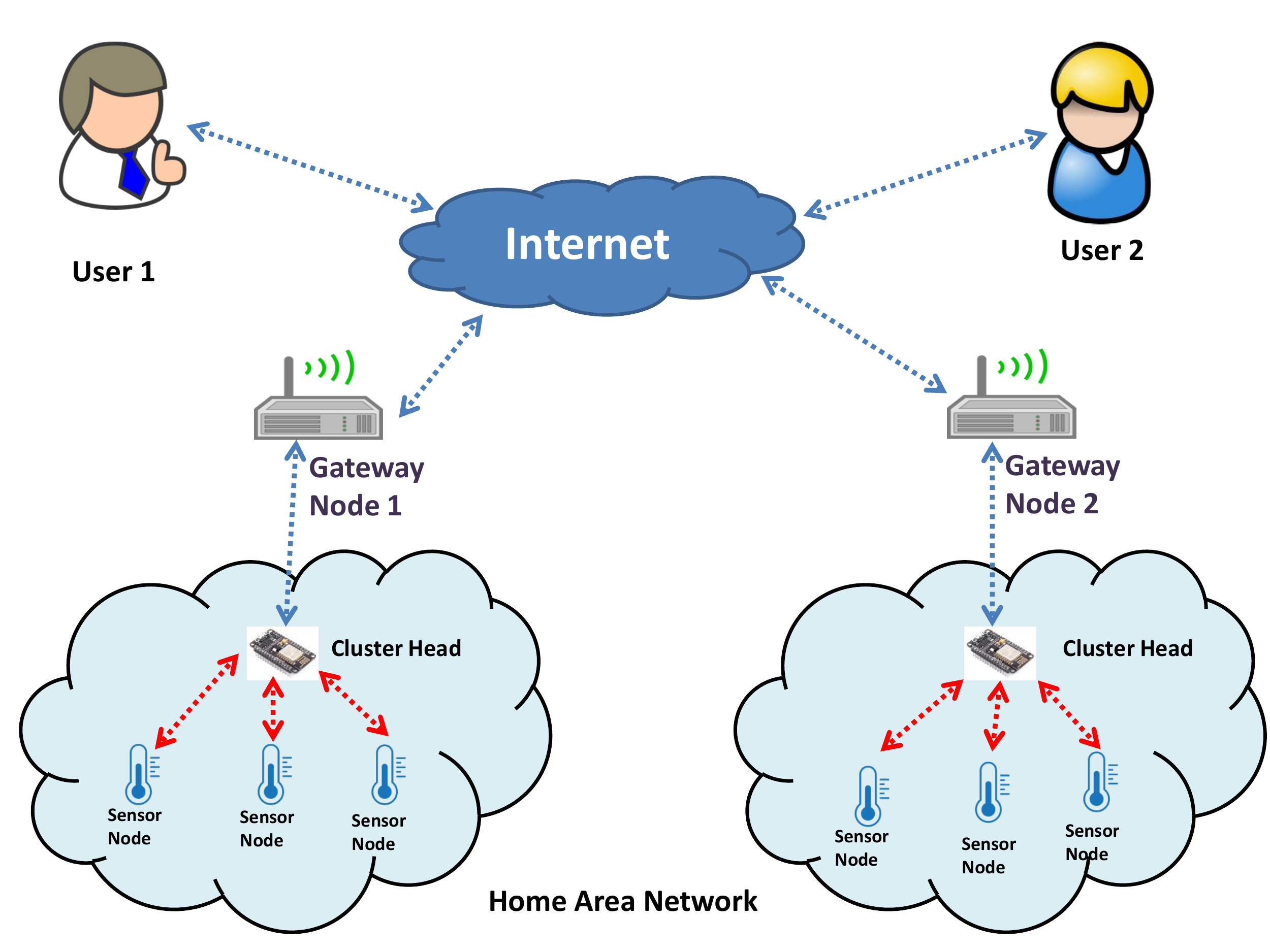}
    \caption{Generic IoT network model}
    \label{fig:1}
\end{figure}

Figure \ref{fig:1}. presents system model for the proposed approach. The user establishes a secure session key  only with the gateway device in this model. Over here, we make realistic assumptions that the communication between sensors and the home gateway is secured though we consider that communication as a locally secured network \cite{25Patel}. This assumption is valid because this model exclusively focuses only on the home network, and there is no foreign agent involved in it \cite{25Patel}. The gateway node forwards all received data to the cloud and provides live sensor data to the user. 

In this network model, all application users register with the gateway device, and the gateway device issues the requisite credentials based on their requirements for data access. Subsequently, whenever the IoT application user wants to access the data, they forward a request to the gateway node for the temporary lightweight session establishment. 
\subsection{\textbf{Notations and symbols}}
\noindent Table \ref{notations} presents notations used for articulation of the proposed scheme.
\begin{table}
   \centering
   \caption{\textbf{Notations and symbols}}
    \begin{tabular}{p{2cm} p{5cm}} \hline
    \textbf{Symbols}&\textbf{Description}  \\ \hline
         $UID$ & \emph{User Identity} \\
         $||$ & \emph{String concatenation} \\
         $UPW$ & \emph{User Password} \\
         $\oplus$ & \emph{Ex-OR Operation} \\
         $GW$ & \emph{Gateway} \\ 
         $S_k$ & \emph{Session key} \\ 
         $U$& \emph{User} \\ 
         $UR_i$ & \emph{User random number} \\ 
         $GWR_{j}$ \& $n_{gw}$ & \emph{Gateway random number} \\ 
         $Hash(.)$ & \emph{One way Hash function} \\
         $T_k$ & \emph{Time stamp} \\ 
         $K_{gw}$ & \emph{Gateway computed key} \\ 
         $K_{u}$ & \emph{User computed key} \\   
         $\mathcal{A}_d$ & \emph{Adversary} \\ 
         $\rightarrow$& \emph{Insecure communication} \\ 
         $\Rightarrow$ & \emph{Secured communication} \\ \hline 
    \end{tabular}
    
    \label{notations}
\end{table}
\subsection{\textbf{ECC encryption/decryption}}
\noindent Device A and Device B uses key derivation function (KBKDF) that provides every time same output if user provides same input every time. The encryption/decryption operation over elliptic curve occurs as follow:

\noindent \textbf{\textit{Device A performs}}
\begin{enumerate}[1.]
    \item Device A randomly generate \textit{$r_a$} where $r_a$ $\in$   \{1,2,....,n-1\} and is a private key for device A. 
    \item Device A computes \textit{$Pub_A$ = $r_a$ * G}. Here $Pub_A$ serves as a public key. For \textit{E/$F_P$}, \textit{G} is known as a public generator point shared between devices.
    \item Device A publishes \textit{$Pub_A$}. 
\end{enumerate}
\noindent \textbf{\textit{Device B performs}}
\begin{enumerate}[1.]
    \item Device B selects $r_b$ where $r_b$ $\in$ \{1,2,....,n-1\} and is a private key for device B.
    \item Device B computes public key \textit{$Pub_B$ = $r_b$ * G} \item Device B computes \textit{$r_b$ * $PUB_A$  =  $r_b$ *($r_a$ * G) = $r_a$ * ($r_b$ * G)}.
    \item Device B inputs \textit{$r_b$ * $PUB_A$} and generates symmetric key \textit{$K$} using the KBKDF. Device B encrypts message M as \textit{$M_E$ = $Enc_K(M)$}. Device B sends $M_E$ message to Device A along with the value \textit{$Pub_B$} through public channel.
\end{enumerate}
\noindent \textbf{\textit{Device A performs}}
\begin{enumerate}[1.]
    \item Device A inputs \textit{$r_a$ * ($Pub_B$)} into the same KBKDF and generates symmetric key for the decryption as \textit{$K$}. Device A decrypts $M_E$ as \textit{M = $Dec_K(M_E)$}.
\end{enumerate}
\subsection{\textbf{Threat Model}}
\noindent We follow the following threat model considered from \cite{10Dolev} which derives certain capabilities for the adversary $\mathcal{A}_d$. The polynomial time $\mathcal{A}_d$ can: 
\begin{enumerate}[1]
   \item compute valid pair of $identity * password$ offline in the polynomial time.
    \item extract information from the user's smart card \cite{22Messerges}. 
    \item fully access communication channel between U-GW.
    \item obtain the previously calculated session key.
    \item get secrets of gateway node during system failure situations \cite{18kocher}. 
    \item get physical access to user device and can extract saved information through power analysis.
    \item Adversary $\mathcal{A}_d$ is unable to access the user password or SC simultaneously. The simultaneous access of both the password and SC leads to impersonation attack \cite{25Patel}.
\end{enumerate}

\section{Proposed Scheme}\label{proposed}
\noindent This section discusses the proposed ECC-based two-factor authentication scheme for the U-GW paradigm. The proposed scheme is divided into four steps: (1) Setup phase, (2) User registration phase, (3) Login and key Generation phase, and (4) Password update phase. We assume that there are \textit{n} users and a single gateway device for the proposed work. Table \ref{login} bright ups tabular presentation for the login and key generation phase. 
\label{sec:proposedscheme}
\subsection{\textbf{Setup phase}}
\noindent During the setup phase, the user device ($U_i$) and the gateway device ($GW$) setup computation environment. For the proposed scheme, the setup phase performs as follow: 

\noindent \textbf{\textit{UF1:}}
\begin{enumerate}[Step 1:]
    \item Both $U_i$ and $GW$ devices agree on \textit{$E/F_p$} (represents curve defined over finite field F using prime number p), the generator point \textit{G(X,Y)} and the KBKDF.
    \item $U_i$ produces random number \textit{$UR_i$} in such a way \textit{$UR_i \in Z_p^{*}$}, Here p is a large prime number and the range of \textit{$Z_p^{*}$} is \textit{(0,1,2....,p-1)}.
    \item $U_i$ calculates \textit{$D_u$ = $UR_i*G$}.  
    \item $U_i$ sends \textit{$D_u$} to the $GW$ in offline manner. 
\end{enumerate}
\noindent \textbf{\textit{GWF1:}}
\begin{enumerate}[Step 1:]
    \item $GW$ generates random number \textit{$GWR_{j}$} where \textit{$GWR_{j} \in Z_p^{*}$}.
    \item $GW$ calculates \textit{$S_i$ = $GWR_{j}$*G}. 
    \item $GW$ transmits \textit{$S_i$} to the $U_i$.
\end{enumerate}    
\noindent \textbf{\textit{UF2:}}
\begin{enumerate}[Step 1:]
\item $U_i$ generates symmetric key \textit{$K_{u}$} using \textit{$UR_i$,$S_i$} by KBKDF. This is not the session key but this is a key that is used for encryption/decryption by both $U_i$ and $GW$ whenever they establish session.
\end{enumerate} 
\noindent \textbf{\textit{GWF2:}}
\begin{enumerate}[Step 1:]    
\item $GW$ generates symmetric key \textit{$K_{gw}$} using \textit{$GWR_{j}$,$D$} by KBKDF. 
\item $GW$ discards $D$.
\end{enumerate}
Thus, after completion of setup phase, $U_i$ device have \{$UR_i$, $D_u$, $S_i$\} in its confidential memory while $K_{gw}$ in the secret memory and the $GW$ device have \{$GWR_{j}$, $S_i$\} in its confidential memory while $K_{gw}$ in its secret memory. The data stored in confidential memory is read-only and can be read by anyone using power analysis or reverse engineering. But, the data stored in secret memory can be read by the device itself only.

\subsection{\textbf{User Registration Phase}}
\noindent In this phase, the $U_i$ registers to the $GW$ by using identity $UID$ and password $UPW$. The gateway device generates an SC (SC) with numerous parameters. These parameters are used by $U_i$ during the login and key generation phase. 

\noindent \textbf{\textit{UF1:}}
    \begin{enumerate}[Step 1:]
            \item $U_i$ chooses unique identity $UID$ and secure password $UPW$.
            \item $U_i$ calculates $h_i$ = Hash($D_u || UPW || UID$). 
            \item $U_i$ generates message $M_1$ = \{$UID$,$h_i$\} and sends $M_1$ to $GW$ over secure channel. We created a Transport Layer Security (TLS) channel over the MQTT protocol for data transmission across a secure means in our implementation. 
    \end{enumerate}
\noindent \textbf{\textit{GWF1:}} 
    \begin{enumerate}[Step 1:]
        \item $GW$ receives message $M_1$ and retrieves user identity $UID$. $GW$ verifies $UID$ in the database and checks its availability. If similar $UID$ is used by another user then $GW$ transmits message such as "Identity not available", else ($UID$ is available for user $U_i$) $GW$ continues. 
        \item $GW$ computes $X_i$ = Hash($UID \oplus h_i$), \\ $MID$ = $Enc_{K_{gw}}$($UID$), $T_1$ = Hash($S_i || K_S$), $O_i$ = $T_1 \oplus h_i$.
        \item $GW$ frames smart card \textit{SC}, $SC$ = \{$O_i$,MID,$X_i$,Hash(.)\}. 
        \item $GW$ sends \textit{SC} to the user through secure channel. 
    \end{enumerate}
\noindent \textbf{\textit{UF2:}}
    \begin{enumerate}[Step 1:]
        \item User $U_i$ acquires \textit{SC} from the $GW$ device and keep $D_u$ after encrypting it using $TK_{gw}$ = \{$UPW$*P\} and generates \textit{Z} = $Enc_{TK_{gw}}$($D_u$). $U_i$ stores \textit{Z} into secret memory of the \textit{SC}. Thus final \textit{SC} with user have \{$O_i$,$MID$,$X_i$,$Hash()$,\textit{Z}\} parameters. Except \textit{Z}, remaining parameters will be in confidential memory of the \textit{SC}.
    \end{enumerate}
\subsection{\textbf{Login and key generation phase}}
\noindent The user will supply their identify, password, and SC to the card reader during this phase. The card reader will authenticate the user's identity and password before communicating with the Gateway to establish the session key.  \\ 
\noindent \textbf{\textit{UF1:}}
  \begin{enumerate}[Step 1:]
      \item $U_i$ inserts $UID$, $UPW$ and \textit{SC} in to SC reader (\textit{SCR}). \textit{SC} computes $Dec_{UPW*P}(Z)$ and extracts $D$ from its secret memory.
      \item calculates $h_i$ = Hash($D_u || UPW || UID$), $X_i^{*}$ = Hash($UID||h_i$), and verify $X_i^{*} \stackrel{?}{=} X_i$, T = $O_i \oplus h_i$, $L_i$ = Hash($D_u || UID$), PID = T $\oplus$ Hash($UID || L_i || T_k$), Z = $Enc_{K_{gw}}$($D_u$).
      \item $U_i$ frames \textit{M = \{MID,Z,$T_{ki}$,PID\}} and sends it to the gateway device. Over here $T_{ki}$ is the current time stamp of User.
  \end{enumerate}
\noindent\textbf{\textit{GWF1:}}
  \begin{enumerate}[Step 1:]
      \item $GW$ receives \textit{message M} and computes own timestamp $T_k^{*}$ , Verify timestamp $\Delta T \stackrel{?}{\leq} T_k^{*}$ - $T_k$, 
      \item $GW$ calculates $Dec_{K_{gw}}(MID)$ to receive $UID$.
      \item $GW$ extracts \textit{$D_u$} by calculating $Dec_{K_{gw}}(Z)$.
      \item $GW$ computes T = Hash($GWR_{j}||K_{gw}$), $N_i$ = Hash($D_u || UID$), $PID^{*}$ = T $\oplus$ Hash($UID||N_i||T_{ki}$). 
      \item $GW$ verifies $PID \stackrel{?}{=} PID^{*}$. After successful verification only $GW$ continues.
      \item $GW$ generates $n_{gw}$ in order to $n_{gw} \in Z_p^{*}$ and computes NS = $Enc_{K_{gw}}(n_{gw})$. 
      \item $GW$ calculates $S_k$ = Hash($UID || T || n_{gw} * D_u$).
      \item $GW$ calculates key verifier $SQ_i$ = \\ Hash($S_k || n_{gw} || T || T_{k_{new}}$) and forwards $M_{new}$ = \{$SQ_i,NS,T_{k_{new}}$\} to user.
  \end{enumerate}
\noindent \textbf{\textit{UF2:}}
    \begin{enumerate}[Step 1:]
        \item $U_i$ recovers $n_{gw}$ = $Dec_{K_{u}}(NS)$.
        \item $U_i$ calculates $S_k^{*}$ = Hash($UID||T|| n_{gw} * D_u$).
        \item $U_i$ computes $SQ_i^{*}$ = Hash($S_k^{*}||n_{gw}||T||T_{k_{new}}$).
        \item $U_i$ verifies $SQ_i \stackrel{?}{=} SQ_i^{*}$. If verification is successful then he/she considers $S_k^{*}$ as a new current session key.  
    \end{enumerate}
\begin{table*}
 \centering
  \caption{\textbf{Proposed Scheme : Login and Authentication phases}}
    \begin{tabular}{|p{8cm}|p{7cm}|} \hline
    \textbf{{User ($U_i$)}} & \textbf{{Gateway (GW)}}  \\ \hline
    \textbf{{The Login \& Authentication Phase:}} & \\
    {Enters \textit{SC}, provides $UID$, $UPW$,} & \\ 
    {Computes $h_i$ = Hash($UID||UPW||Dec_{UPW}(Z)$),$X_i^{*}$ = Hash($UID \oplus h_i$), $X_i^{*} \stackrel{?}{=} X_i$,}  & \\ 
    {Computes $T$ = $O_i \oplus h_i$, $L_i$ = Hash($D || UID$),} & \\ 
    {Computes $PID$ = T $\oplus$ Hash($UID || L_i || T_k$), Z = $Enc_{K_{u}}(D)$,} & \\
    {$\xrightarrow{\{PID, T_{ki}, MID, Z\}}$} &  \\ 
     & {Gets current time $T_k^{*}$} \\  
     & {Verifies $\Delta T \stackrel{?}{\leq} T_k^{*}$ - $T_k$,$Dec_{K_{gw}}$(MID),$Dec_{K_{gw}}$(Z)}\\ 
     & {Computes $N_i$ = Hash($ D_u || UID$),T = Hash($GWR_{j} || K_{gw}$),$PID^{*}$ = T $\oplus$ Hash($UID || N_i || T_k$),} \\
     & {Verifies $PID^{*} \stackrel{?}{=} PID$, Generates $n_{gw} \in Z_p^{*}$,}\\
     & {Computes NS = $Enc_{K_{gw}}$($n_{gw}$), $S_k$ = Hash($UID || T || n_{gw}* D_u$), $SQ_i$ = Hash($S_k || n_{gw} || T || T_{k_{new}}$)} \\ 
    {Gets $T_{k_{new}}^{*}$, Verifies $\Delta T_{new} \stackrel{?}{\leq} T_{k_{new}}^{*} - T_{k_{new}}$,} & {$\xleftarrow{NS, SQ_i,T_{k_{new}}}$}\\
    {Computes $n_{gw}$ = $Dec_{K_{u}}(NS)$} &  \\
    {Computes $S_k^{*}$ = Hash($UID || T || n_{gw}* D_u$),} & \\
    {Computes $SQ_i^{*}$ = Hash($S_k^{*} || n_{gw} || T || T_{k_{new}}$),} & \\ 
    {Verifies $SQ_i^{*} \stackrel{?}{=} SQ_i$,} & \\
    {Session key is $S_k$ = $S_k^{*}$} & \\ \hline
    \end{tabular}
    \label{login}
\end{table*}

\subsection{\textbf{Password update phase}}
\noindent Using this phase, $U_i$ updates his/her identity $UID$ and password $UPW$ through SCR device. 

\noindent \textbf{\textit{UF1:}}
\begin{enumerate}[Step 1:]
    \item $U_i$ provides \textit{SC} to SCR and selects password update option.
    \item $U_i$ provides $UID$, $UPW$, $UPW^{new}$ 
\end{enumerate}
\noindent \textbf{\textit{SCRF1:}}
\begin{enumerate}[Step 1:]
    \item SCR validates \{$UID$, $UPW$\}. SCR computes $h_i$ = Hash($UID|| UPW ||D$), $X_i^{*}$ = Hash($UID \oplus h_i$) and checks $X_i^{*} \stackrel{?}{=} X_i$. 
    \item Computes $h_i^{new}$ = Hash($UID || UPW^{new} || D_u$).
    \item Computes $O_i$ = $h_i \oplus h_i^{new} \oplus O_i$, $X_i^{new}$ =  Hash($UID \oplus h_i^{new}$), $Z_{new}$ = $Enc_{UPW^{new}}(D_u)$.
    \item Replaces \{$X_i$,$O_i$,\textit{Z}\} by \{$X_i^{new}$,$O_i^{new}$,$Z_{new}$\}. 
\end{enumerate}

\section{Security Analysis}
\label{sec:securityanalysis}
\noindent In this section, we put forward an informal security analysis using the Dolev-Yao channel \cite{10Dolev} and a formal security analysis using the AVISPA and the BAN Logic.
\subsection{\textbf{Informal security analysis}}
\noindent We present an informal security analysis for the proposed authentication mechanism employing the Doleve-Yao threat model in this subsection.  \cite{10Dolev}. Table \ref{Tab:Seccomp} provides a security comparison for the proposed scheme with the other existing schemes. 
\label{sec:informalsecurityanalysis}
\begin{enumerate}[F1.]
\item {\textbf{Password guessing attack:}}\\ 
\noindent In this attack, adversary $\mathcal{A}_d$ performs offline and online password assumptions and verifies numerous passwords. $\mathcal{A}_d$ uses a famous bruit force dictionary method to get success in this attack. In our scheme, an $\mathcal{A}_d$ can not achieve success in password guessing attacks due to hashing of Hash($UID || UPW || D_u$). Even though adversary may guess the correct $UID$ and $UPW$; then also, he/she will not receive value of \textit{$D_u$}. After guessing the correct \{$UID$, $UPW$\} pair, an adversary can not compute the final session key because of the unavailability of the random parameter $ns$. As a result, the proposed scheme is resistant to the password guessing attack. 
\noindent
\item {\textbf{Message Replay attack:}}\\
\noindent In this attack, the polynomial adversary $\mathcal{A}_d$ captures communication between $U_i$ and $GW$. These messages are used for spoofing and impersonation type activities. We use three random parameters that provide immunity from the replay activities of the attacker. The first parameter is a timestamp ($T_k$), the second one is a random variable $UR_i$, and the last parameter is random nonce $N_i$. With the help of these parameters, the receiver can quickly validate the freshness of received messages. As a result, the proposed scheme is protected against the message replay attack.
\noindent
\item {\textbf{User anonymity:}}\\
\noindent If adversary $\mathcal{A} d$ acquires the user's identity, we can state that the user anonymity attack is successful. In our scheme, we use hash function to protect identity and password at user side ($h_i$ = Hash($UID||UPW||Dec_{UPW}$)) as well as at gateway side, we validate inside hash computations. Thus, neither entity shares $UID$ over an insecure channel in the proposed scheme.
\noindent
\item {\textbf{Perfect forward secrecy attack:}}\\
\noindent In this threat, the availability of a gateway secret key $K_{gw}$ does not lead an adversary towards successful session key computations. The proposed scheme uses a random number ($n_{gw}$) at the gateway side for the session key computation; thus, even though the adversary gets the gateway secret key $K_{gw}$, they do not succeed in the previous session key as well as session key for future communications.    
\noindent
\item {\textbf{Stolen SC attack:}}\\
\noindent In the stolen SC attack, $\mathcal{A}_d$ receives SC and performs power analysis to extract data. After extracting data, $\mathcal{A}_d$ require password to receive $Z$. Thus, the extraction of SC data does not provide the right direction to an adversary for the session key computations. As a result, the proposed scheme is resistant to stolen SC threats. 
\noindent
\item {\textbf{Privileged insider attack:}}\\
\noindent In this attack, $\mathcal{A}_d$ is an insider to the gateway device and can see the received messages. We do not relieve $UID$ anywhere in the plaintext during computation in the proposed scheme. After receiving $PID$ from the user, the gateway performs verification for the $PID^{*}$ = $PID$, and it is secured using hash. Thus, $\mathcal{A}_d$ at the gateway device neither receive identity nor password. For key computation, $\mathcal{A}_d$ needs \{$UID$,$T$,$D_u$\} and it is near to impossible for an $\mathcal{A}_d$ to re-frame the same pair. As a result, the proposed scheme is impervious to privileged insider attacks.   

\item {\textbf{Mutual authentication:}}\\
\noindent The proposed scheme achieves mutual authentication. The gateway $GW$ calculates $PID^{*}$ and validate it with the $PID^{*}$. The $GW$ also validates ID and PW of the $U_i$. The $U_i$ authenticates $GW$ by verifying $Q_i^{*} \stackrel{?}{=} Q_i$. The $Q_i$ is calculated with the help of received key. As a result, we may assert that the suggested system meets the mutual authentication property. 

\item {\textbf{Man-In-The-Middle (MITM) attack:}}\\
\noindent In this attack, $\mathcal{A}_d$ receives messages communicated between the $U_i$ and the $GW$. In proposed scheme, even $\mathcal{A}_d$ receives \{$PID$, $MID$, $Z$\} then also he/she will not get success in order to read inside data due to hashing and ECC encryption. Using \{$NS$ (Computed parameter using $n_{gw}$) , $SQ_i$\}, $\mathcal{A}_d$ could not calculate the $S_k$ because of inadequate information about \{$n_{gw}$, $UID$, $D$\}. As a result, the suggested technique is protected against an MITM attack. 
\noindent
\item {\textbf{User impersonation attack:}}\\
\noindent To impersonate as a legitimate user, an adversary $\mathcal{A}_d$ captures SC parameters such as \{$O_i$, $S_i$, $A_i$, Z\}. Using these parameters, an  $\mathcal{A}_d$ can not generate correct login requests \{$PID$, $T_{ki}$, $MID$, Z\}. As a result, an adversary's imitation attempts are unsuccessful. Hence, the proposed scheme is secured against the user impersonation attacks. 

\noindent
\begin{table*}
    \centering
    \caption{\textbf{Security comparison}}
     \begin{tabular}{|p{1cm} p{5cm} p{1cm} p{1cm} p{1cm} p{1cm} p{1cm} p{1.5cm}|}  \hline
    & Security Parameter & \cite{14islam} & \cite{33Zhang} & \cite{23Odelu} & \cite{21Liu} & \cite{26Qiu} &  Proposed \\  \hline
     F1 & Offline Password guessing & $\times$ & $\times$ & \checkmark & \checkmark & $\times$ & \checkmark  \\ 
     F2 & Replay & \checkmark & \checkmark & $\times$ & $\times$ & \checkmark & \checkmark  \\ 
     F3 & User anonymity & \checkmark & \checkmark & \checkmark & \checkmark & $\times$ & \checkmark  \\ 
     F4 & Perfect forward secrecy & \checkmark & \checkmark & \checkmark & \checkmark & \checkmark & \checkmark  \\ 
    F5 & Stolen smart card & $\times$ & $\times$ & \checkmark & \checkmark & \checkmark & \checkmark  \\ 
    F6 & Privilege insider & \checkmark & \checkmark & \checkmark & \checkmark & \checkmark & \checkmark  \\ 
    F7 & Mutual authentication & \checkmark & \checkmark & \checkmark & \checkmark & \checkmark & \checkmark  \\
    F8 & Man-in-the-Middle & $\times$ & \checkmark & \checkmark & \checkmark & \checkmark & \checkmark  \\
    F9 & User impersonation & $\times$ & \checkmark & \checkmark & $\times$ & $\times$ & \checkmark  \\ 
    F10 & Gateway impersonation & \checkmark & \checkmark & \checkmark & $\times$ & $\times$ & \checkmark  \\ 
   F11 & Denial of Service & \checkmark & \checkmark & \checkmark & \checkmark & \checkmark & \checkmark  \\ \hline
     \end{tabular}
     \label{Tab:Seccomp}
\end{table*}
\item {\textbf{Gateway impersonation attack:}}\\
\noindent To impersonate as a legitimate gateway, an adversary $\mathcal{A}_d$ captures the public message and public parameters such as the gateway device. Using these parameters, an  $\mathcal{A}_d$ can not generate correct reply $M_{new}$ = \{$SQ_i,NS,T_{k_{new}}$\} for the user due to inadequacy of parameter $K_{gw}$. Thus, an adversary does not get success in user impersonations. Similarly, an adversary $\mathcal{A}_d$ can not decrypt the message $MID$ and $Z$ due to the nonavailability of gateway master secrets. As a result, the proposed scheme is unlikely to be affected by a gateway impersonation attack. 

\item {\textbf{Denial of service attack:}}\\
Using this attack, the polynomial adversary $\mathcal{A}_d$ tries to stop $U_i$ and $GW$ from key generation through either flooding or any other means. An adversary $\mathcal{A}_d$ gives rise to redundant requests for either device and generates too much delay in the key generation process. In the real-time scenario, it is nearly impossible to achieve full proof protection against DoS-based attacks; still, we tried to protect our scheme using the time-stamp and random numbers. As a result, it prevents an attacker $\mathcal{A}_d$  from gaining complete control of the system. 
\end{enumerate}
\subsection{\textbf{Mutual authentication using BAN logic}}
\label{sec:banlogic}
\noindent In this subsection, we show that our scheme achieves mutual authentication property using this BAN logic tool. The BAN logic generates trust between the principles (communicating parties). It focuses on the proposed scheme's coherence and feasibility. It works on proper formulations of postulates, inference rules, and assumptions with realistic goals. 
\subsection{\textbf{Postulates}}
\noindent Figure \ref{fig:Post} shows basic postulates used by the BAN Logic. Over here \textit{P} and \textit{Q} are the communicating principals, \textit{M1} and \textit{M2} are the communicated messages. The key \textit{KS} is used for the encryption/decryption operations and $k_s$ is the shared secret.
\begin{figure*}
    \centering
    \includegraphics[width=0.75\textwidth]{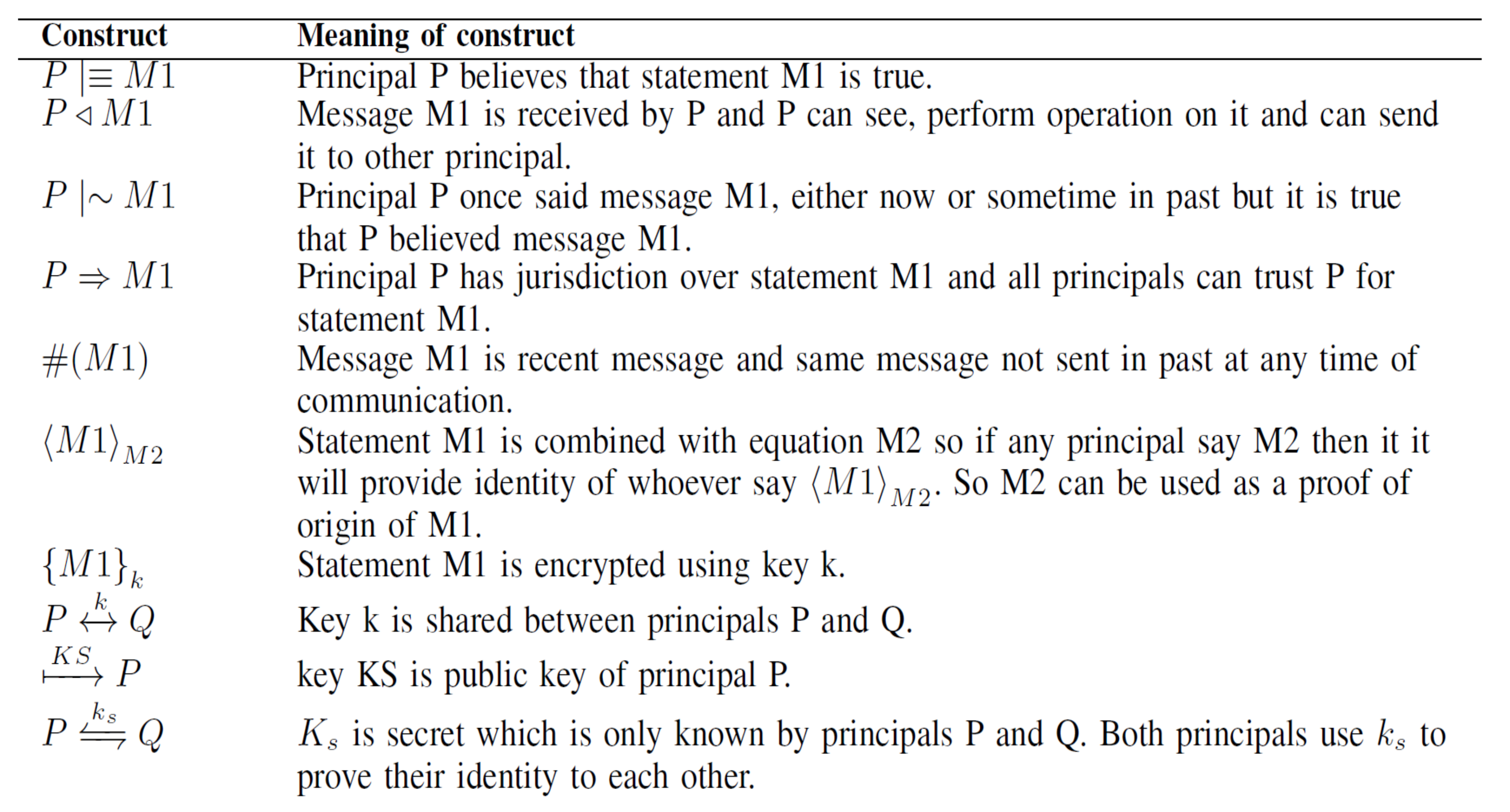}
    \caption{Postulates}
    \label{fig:Post}
\end{figure*}
\subsection{\textbf{Inference rules}}
\noindent Inference rules are derivations that are derived from postulates by the BAN logic. The inference rules prove that the proposed authentication scheme satisfies the mutual authentication following. Figure \ref{fig:infer} shows basic inference rules generated for mutual authentication proof.
\begin{figure*}
    \centering
     \includegraphics[width=0.75\textwidth]{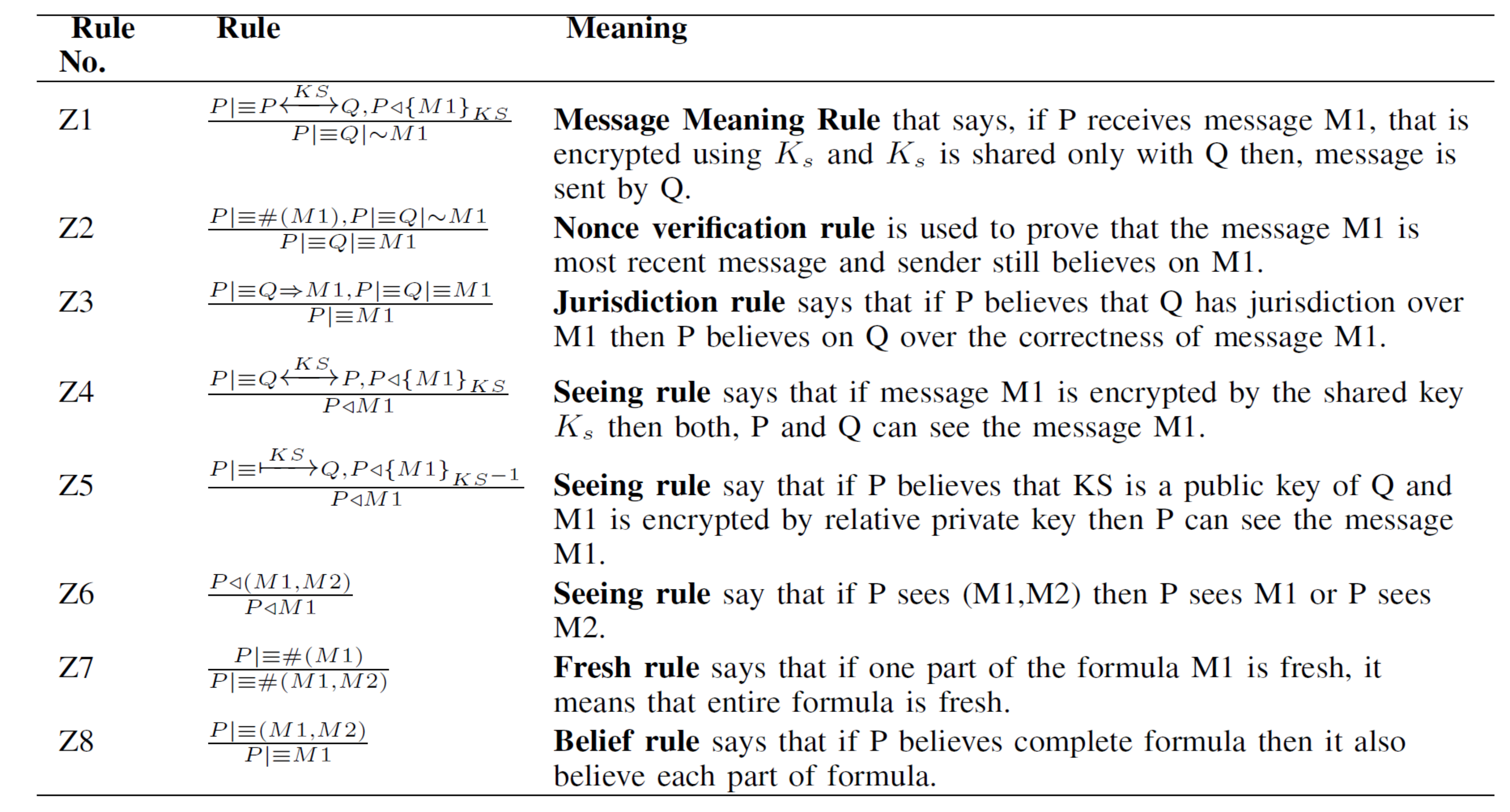}
    \caption{Inference Rules}
    \label{fig:infer}
\end{figure*}
\subsection{\textbf{Assumptions}}
\noindent BAN Logic works based on the following assumptions:
\begin{itemize}
    \item There are some shared secrets.
    \item Principals can compute fresh nonces.
    \item Each principals believes on each other.
    \item Each principal can recognize his/her messages.
    \item If principal \textit{P} believes that \textit{KS} is his public key, then \textit{P} must know corresponding private key $KS^{-1}$. 
\end{itemize}
\subsection{\textbf{Goals}}
\noindent For any U-GW model, the followings are the primary goals or conclusions those must be achieved during the authentication. Hence, using above said inference rules, assumptions and postulates, both user ($U_i$) and the gateway ($GW$) have to achieve following goals:
\begin{table}
    \centering
    \caption{Goals}
    \begin{tabular}{p{1cm}p{5cm}} \hline
     Goal No. & Goal  \\ \hline
        1 & User ${U_i} \believes C \sharekey{S_k} GW$ \\ 
        2 & Gateway $GW \believes S \sharekey{S_k} {U_i}$ \\ \hline
    \end{tabular}
    \label{goals}
\end{table}
\begin{theorem}
The proposed satisfies mutual authentication property among $U_i$ and $GW$.
\end{theorem}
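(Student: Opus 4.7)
The plan is to translate the Login and Key Generation phase into the BAN-logic idealisation set up in Section~\ref{sec:banlogic} and then discharge the two goals of Table~\ref{goals} in a symmetric fashion. First I would idealise the two on-wire messages of the phase: the user's request $\{PID,\,T_{ki},\,MID,\,Z\}$ becomes an encrypted packet under the shared secret $K_{gw}$ (established in the setup phase via the ECDH exchange) that binds $U_i$ to the ephemeral contribution $D_u$ and the timestamp $T_{ki}$; the gateway's reply $\{SQ_i,\,NS,\,T_{k_{new}}\}$ becomes an encrypted packet under the same shared secret that binds $GW$ to the nonce $n_{gw}$ and, through $SQ_i$, to the derived key $S_k = \mathrm{Hash}(UID \,\|\, T \,\|\, n_{gw}\cdot D_u)$.

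Next I would list the initial beliefs needed before any rule is fired: (i)~$U_i \believes U_i \sharekey{K_{gw}} GW$ and the symmetric statement for $GW$, both of which follow from the setup phase; (ii)~the freshness beliefs $U_i \believes \fresh{T_{ki}}$, $U_i \believes \fresh{UR_i}$, $GW \believes \fresh{T_{k_{new}}}$ and $GW \believes \fresh{n_{gw}}$; and (iii)~the jurisdiction premises $GW \believes (U_i \controls U_i \sharekey{S_k} GW)$ and $U_i \believes (GW \controls U_i \sharekey{S_k} GW)$, reflecting that each party has authority over its own contribution to the key.

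The core derivation toward Goal~2 (the $GW$ side) would chain four of the rules from Figure~\ref{fig:infer}. I apply the seeing rule Z4 to $MID$ and $Z$ to recover $UID$ and $D_u$; then Z1 (message meaning) to obtain $GW \believes U_i \oncesaid (D_u, T_{ki})$; then Z2 (nonce verification), lifted to the whole tuple via Z7 from the freshness of $T_{ki}$, to conclude $GW \believes U_i \believes (U_i \sharekey{S_k} GW)$; and finally Z3 (jurisdiction) to reach $GW \believes U_i \sharekey{S_k} GW$. Goal~1 is discharged symmetrically along the response message $\{SQ_i, NS, T_{k_{new}}\}$: the key verifier $SQ_i$ plays the role of the authentic binding, and the freshness of $T_{k_{new}}$ plus jurisdiction of $GW$ over $n_{gw}$ allow Z1--Z2--Z3 to conclude $U_i \believes U_i \sharekey{S_k} GW$.

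The main obstacle I anticipate is the idealisation step and, within it, justifying the freshness premises required by Z2. Because $S_k$ itself is a hash involving the ECDH value $n_{gw}\cdot D_u$, I must argue that the freshness of $n_{gw}$ (generated in GWF1 of the login phase) propagates via Z7 to the whole commitment before Z2 may be invoked; otherwise nonce verification yields only that $GW$ once committed to $S_k$, not that the commitment is current. A secondary nuisance is expressing $NS$ and $SQ_i$ as BAN-style encrypted terms so that Z4 and Z1 apply cleanly without smuggling in the hash function as a separate cryptographic primitive, and ensuring that the jurisdiction assumption in~(iii) is not silently used in a circular way to assert the very statement being derived.
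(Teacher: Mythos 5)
Your plan follows essentially the same route as the paper's own proof: idealise the two login-phase messages under the shared secret established at setup, assume freshness of $T_{ki}$, $T_{k_{new}}$ and $n_{gw}$, and then chain the message-meaning, nonce-verification and jurisdiction rules to discharge the two goals of Table~\ref{goals}, one per message direction. If anything your version is stated more carefully than the paper's (you make the jurisdiction premises explicit and flag the freshness-propagation and circularity pitfalls, whereas the paper folds these into assumptions such as \textit{Y7} and applies the rules somewhat loosely in steps \textit{W3}--\textit{W4}), but the decomposition and the key inferences are the same.
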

\begin{proof}
We rewrite the messages communicated in login and authentication phase of the proposed scheme in the generic form as below:

\textit{\textbf{MS 1:}} $U_i  \rightarrow GW$: (Enc(ID), Hash($S_i||K_{gw}$)$\oplus$ \\
Hash($D||UPW||UID$) $\oplus$ Hash($D||UID||UPW$) $\oplus$ \\ Hash($UID$ $|| Hash(D ||UID)$ $||T_{k_i}$),\textit{Enc(D)}, $Ts_i$)

\textit{\textbf{MS 2:}} $GW \rightarrow U_i$: (\textit{Enc($n_{gw}$)}, \\ Hash(Hash($UID$ $||Hash(GWR_{j}||K_{gw})$ $||ns*D$) \\ $||Hash(GWR_{j}||K_{gw})$ $||T_{k_{new}}$), $T_{k_{new}}$)

\noindent \textbf{Idealized form:} We can rephrase MS1 and MS2 in idealized form as follows:

\textit{\textbf{MS 1:}} $U_i  \rightarrow GW$: $\langle<$ {\textit{(ID)}, ($S_i||K_{gw}$),\\
($D||UPW||UID$), ($D||UID||UPW$) , ($UID$ $|| \\ Hash(D ||UID)$ $||T_{ki}$),\textit{(D)}, $Ts_i$ $> \rangle$}$_{U_i \sharekey{D,S_i} GW}$

\textit{\textbf{MS 2:}} $GW \rightarrow U_i$: $\langle<$ ($n_{gw}$), (($UID$ $||(GWR_{j}||K_{gw})$ $||ns*D$) $||(GWR_{j}||K_{gw})$ $||T_{k_{new}}$), $T_{k_{new}}$ $>\rangle$ $_{U_i \sharekey{D,S_i} GW}$

\noindent \textbf{Goal:} We define goals of the proposed scheme in idealized form as follow:

\textbf{GL1:}  $GW \believes U_i \sharekey{S_k} GW$ 

\textbf{GL1:}  $U_i \believes U_i \sharekey{S_k} GW$

\noindent Following assumptions are used to prove the mutual authentication:

\textit{\textbf{Y1.}} $U_i \believes \fresh{T_{ki}}$

\textit{\textbf{Y2.}} $GW \believes \fresh{T_{ki}}$

\textit{\textbf{Y3.}} $U_i \believes \fresh{T_{k_{new}}}$

\textit{\textbf{Y4.}} $GW \believes \fresh{T_{k_{new}}}$

\textit{\textbf{Y5.}} $U_i \believes (U_i \sharekey{S_i,D} GW)$

\textit{\textbf{Y6.}} $GW \believes (U_i \sharekey{S_i,D} GW)$

\textit{\textbf{Y7.}} $U_i \believes GW \oncesaid (U_i \sharekey{S_k} GW)$

\textit{\textbf{Y8.}} $GW \believes \fresh{n_{gw}}$

\textit{\textbf{Y9.}} $U_i \believes \fresh{n_{gw}}$

\noindent Mutual authentication among $U_i$ and $GW$ is achieved as given below:

\textit{\textbf{W1:}} From \textit{MS 1},

$GW \sees \langle<$ {(ID), ($S_i||K_{gw}$),
($D||UPW||UID$), \\ ($D||UID||UPW$) , ($UID$ $|| Hash(D ||UID)$ $||T_{ki}$),\\ (D), $Ts_i$ $> \rangle$}$_{U_i \sharekey{D,S_i} GW}$

\textit{\textbf{W2:}} Using \textit{W1}, \textit{Z1} and \textit{Y6}, we obtain,

$GW \believes U_i \oncesaid \langle<$ {(ID), ($S_i||K_{gw}$),
($D||UPW||UID$), ($D||UID||UPW$) , ($UID$ $|| Hash(D ||UID)$ $||T_{ki}$),(D), $Ts_i$ $> \rangle$}

\textit{\textbf{W3:}} Using \textit{W2}, \textit{Y2}, \textit{Z2}, we gain,

$GW \believes U_i \controls$ $\langle<$ {\textit{(ID)}, ($S_i||K_{gw}$),
($D||UPW||UID$), ($D||UID||UPW$) , ($UID$ $|| Hash(D ||UID)$ $||T_{ki}$),\textit{(D)}, $Ts_i$ $> \rangle$}  

\textit{\textbf{W4:}} Using \textit{Z7}, \textit{Z3}, we get,

$GW \believes  \fresh $ $\langle<$ {(ID), ($S_i||K_{gw}$),
($D||UPW||UID$), \\ ($D||UID||UPW$) , ($UID$ $|| Hash(D ||UID)$ $||T_{ki}$),(D), $Ts_i$ $> \rangle$}  

\textit{\textbf{W5:}} Using \textit{W3},\textit{W4}, \textit{Z8}, we achieve,

$GW \believes U_i \controls$ ($S_i||K_{gw}$),
($D||UPW||UID$), \\ ($D||UID||UPW$) , ($UID$ $|| Hash(D ||UID)$ $||T_{ki}$)

\textit{\textbf{W6:}} Using \textit{W1}, \textit{W4}, \textit{Z2}, \textit{Y8}, we obtain,

$GW \believes GW \sharekey{S_k} U_i$ \textbf{\textit{[Goal 1]}}

\textit{\textbf{W7:}} From \textit{message 2}, we receive,

$ U_i \sees$ $\langle<$ ($n_{gw}$), (($UID$ $||(GWR_{j}||K_{gw})$ $||ns*D$)$ \\ ||(GWR_{j}||K_{gw})$ $||T_{k_{new}}$), $T_{k_{new}}$ $>\rangle$ $_{U_i \sharekey{D,S_i} GW}$

\textit{\textbf{W8:}}  Using \textit{W7}, \textit{Z1} and \textit{Y6}, we receive,

$ U_i \believes GW \oncesaid $ $\langle<$ ($n_{gw}$), (($UID$ $||(GWR_{j}||K_{gw})$ $||ns*D$) $||(GWR_{j}||K_{gw})$ $||T_{k_{new}}$), $T_{k_{new}}$ $>\rangle$

\textit{\textbf{W9:}} Using \textit{W8}, \textit{Y2}, \textit{Z2}, we obtain,

$ U_i \believes GW \controls $ $\langle<$ ($n_{gw}$), (($UID$ $||(GWR_{j}||K_{gw})$ $||ns*D$) $||(GWR_{j}||K_{gw})$ $||T_{k_{new}}$), $T_{k_{new}}$ $>\rangle$

\textit{\textbf{W10:}} Using \textit{W9}, \textit{Z7},\textit{Z3}, we get,

$ U_i \believes \fresh$ $\langle<$ ($n_{gw}$), (($UID$ $||(GWR_{j}||K_{gw})$ $||ns*D$) $||(GWR_{j}||K_{gw})$ $||T_{k_{new}}$), $T_{k_{new}}$ $>\rangle$

\textit{\textbf{W11:}} Using \textit{W10}, \textit{W9}, \textit{Z8}, we achieve,

$ U_i \believes GW \controls $ $\langle<$ (($UID$ $||(GWR_{j}||K_{gw})$ $||ns*D$) $>\rangle$

\textit{\textbf{W12:}} Using \textit{W8}, \textit{W9}, \textit{Z2}, \textit{Y8}, we receive,

$U_i \believes U_i \sharekey{S_k} GW$ \textbf{\textit{[Goal 2]}}

Thus, \textit{Goal 1} and \textit{Goal 2} derives that proposed work satisfies mutual authentication among communicating entities. 
\end{proof}

\section{Formal security Simulation Using AVISPA}
\label{sec:avispa}
\noindent Formal security protocol simulation provides strong protocol verification against cryptographic and networking attacks. AVISPA is widely adopted formal security analysis tool. AVISPA is a one-button tool that verifies the security of protocol against attacks such as replay and MITM attacks. We utilized AVISPA for further security analysis of our protocol. Because of its unique simulation environment, performing simulation in AVISPA is a tremendous challenge. Figure \ref{avispa} shows flow if AVISPA simulation. \\
\begin{figure}
    \centering
     \includegraphics[width=\linewidth]{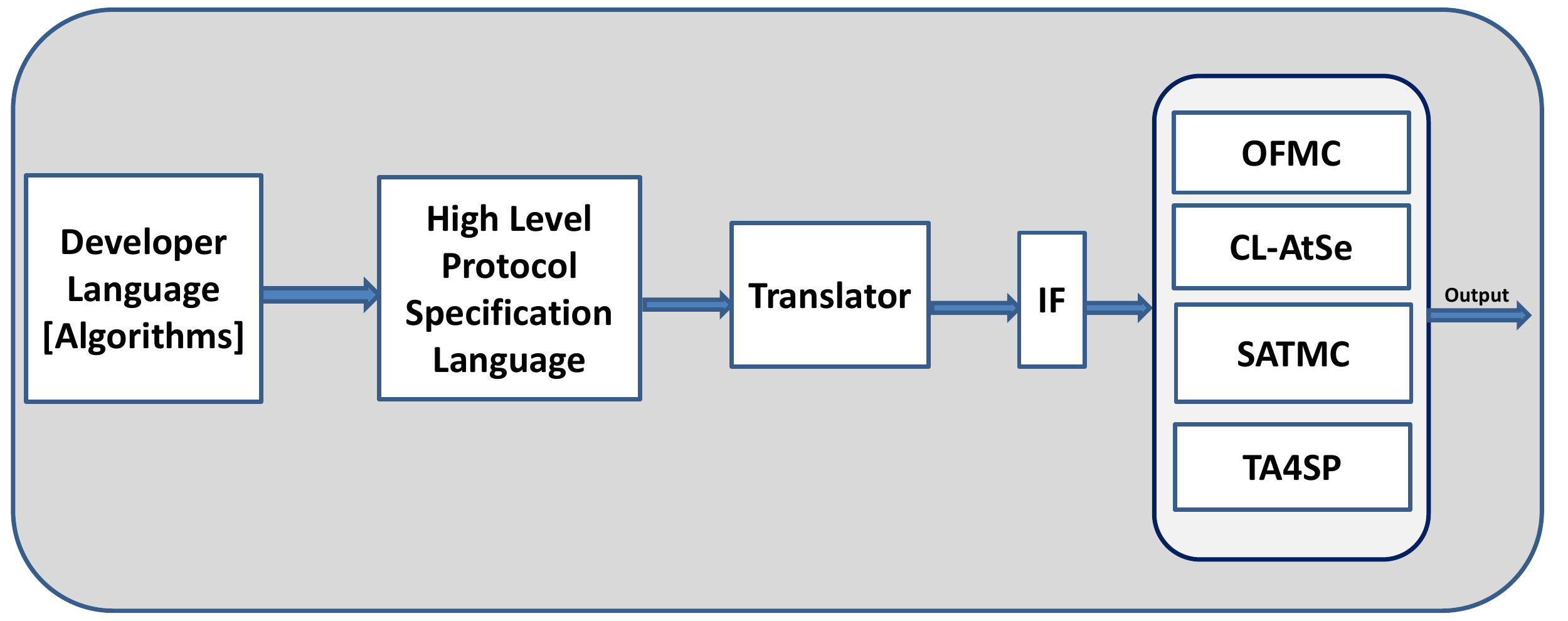}
    \caption{AVISPA Simulation Flow}
    \label{avispa}
\end{figure}
AVISPA conducts its simulations in a difficult-to-read language called High-Level Protocol Specification Language (HLPSL), which is less opaque.
HLPSL is a role-based language for describing proposed protocol and intruder behaviors. As a result, each proposed protocol in AVISPA must be rebuilt in HLPSL.
In HLPSL, roles represent principals, sessions, and the environment. There are three types of roles in HLPSL:
\begin{itemize}
    \item \textbf{Basic Role:} The execution behaviour of the principals is defined with the help of this role. In our protocol, we have two essential roles called as IoT user and application gateway. 
    \item \textbf{Composed Role:} Session modelling of fundamental role events are presented using this role. 
    \item \textbf{Environment Role:} Effective principles and sessions that must be considered during implementation are defined by this role. The starting point is considered as a highest role for simulation execution. 
\end{itemize}
The HLPSL protocol is translated into Intermediate Format (IF) using the HLPSL2IF translator. We are using Delev-Yao medium that works based on two fundamental operations called as \textit{SND} and \textit{RCV}. This is the only medium supported by AVISPA, for communication simulation.
We ran a simulation for the proposed protocol's three phases (registration phase, login phase, and key generation phase). 

The AVISPA tool comprises four sub-tools that function as the back-end to IF, as indicated in Figure \ref{avispa}. To obtain Output Format (OF), the Protocol transformed in IF is injected into these tools. the Tree Automata based on Automatic Approximation for the Analysis of Security Protocol tool, the SAT-Based Model Tracker tool, The On-the-Fly Model (OFM) Checker tool, and the Constraint Logic-based Model Checker tool are four major back end tools utilized for computing OF, which states whether or not the proposed protocol is secure against said attacks. 
\begin{figure}
    \centering
     \includegraphics[width=\linewidth]{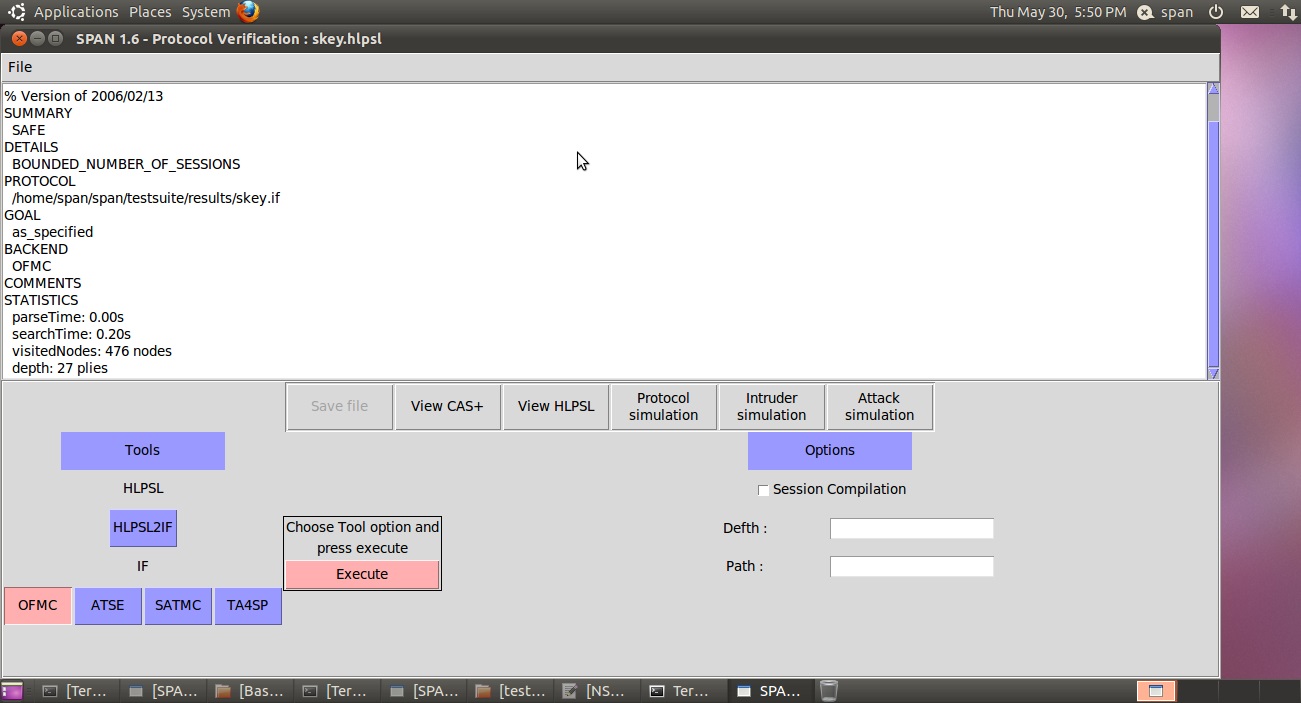}
    \caption{AVISPA OFMC tool Output}
    \label{avispa2}
\end{figure}

The simulation of the proposed protocol using the famed OFMC tool can be seen in the Figure \ref{avispa2}. We utilized two back-end tools (OFMC and CL-AtSe) to simulate the proposed protocol. Due to their inability to simulate bitwise XOR operations, SATMC and TA4SP cannot simulate the proposed protocol, hence whatever results these tools provide are inconclusive.   
\begin{figure}
    \centering
     \includegraphics[width=\linewidth]{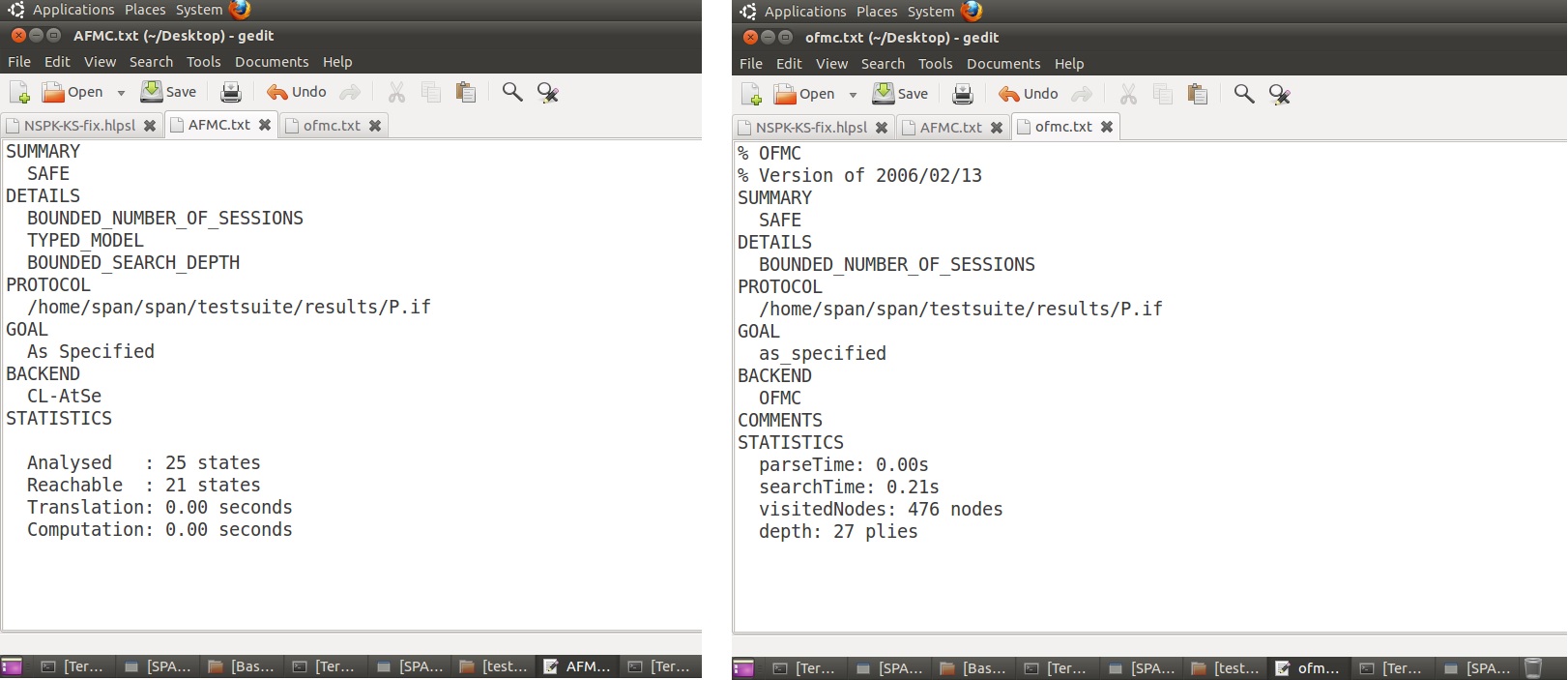}
    \caption{AVISPA Simulation Output}
    \label{avispa1}
\end{figure}

The protocol tested 25 states during simulation using the CL-AtSe tool, of which 21 were attainable, as shown in Figure \ref{avispa1}. While an OFMC simulation revealed that the simulation visited 476 nodes up to a depth of 27 heaps in 0.21 seconds. The protocol is secure against a replay attack and a Man-in-the-Middle attack, according to simulations employing both tools. 
\section{\textbf{Performance Analysis}}
\noindent In this section, we evaluate the developed protocol's performance in terms of computing and communication overhead.  The computation cost analyzes the proposed protocol in terms of the utilization of cryptographic operation. The communication cost analyzes protocol regarding the number of "bits" transmitted by each participating party. 
\label{sec:performancecomparison}
\subsection{\textbf{Computation cost analysis}}
\noindent The computation cost defines the total time consumed by the scheme for session key generation. In the proposed scheme, we uses different functions/methods for session key generation. Let us define this functions as, \textit{$T_{h}$}, \textit{$T_{Pa}$}, \textit{$T_{Pm}$}, and \textit{$T_{Syn}$}. That is time complexity for the hash computation, elliptic curve point addition operation, elliptic curve scalar multiplication operation and symmetric encryption and decryption operation respectively. We extracted the individual time required for each these functions. That was 0.0024 ms, 0.029 ms, 2.227 ms, 0.0047 ms for \textit{$T_{h}$}, \textit{$T_{Pa}$}, \textit{$T_{Pm}$}, \textit{$T_{Syn}$} respectively.  Table \ref{compcost} shows comparative analysis for the computation cost between the proposed scheme and other existing schemes and proves computational efficiency of the proposed scheme. 
\begin{table}
    \centering
    \caption{\textbf{Comparison of Computation Cost}}
    \begin{tabular}{p{1.5cm}p{5cm}} \hline 
    \textbf{Scheme} & \textbf{Computation Cost}  \\ \hline
    \small{\cite{14islam}} & 6\textit{$T_{Pm}$} + 1\textit{$T_{Pa}$} + 10\textit{$T_{h}$} $\approx$ 13.4078 \\
    \small{\cite{33Zhang}} & 6\textit{$T_{Pm}$} + 2\textit{$T_{Syn}$} + 11\textit{$T_{h}$} $\approx$ 13.3905 \\ 
    \small{\cite{23Odelu}} & 5\textit{$T_{Pm}$} + 3\textit{$T_{Syn}$} + 13\textit{$T_{h}$} $\approx$ 13.4767 \\    
   \small{\cite{21Liu}} & 6\textit{$T_{Pm}$} + 4\textit{$T_{Syn}$} + 11\textit{$T_{h}$} $\approx$ 13.3997 \\ 
    \small{\cite{26Qiu}} & 13$T_h$ + 4$T_{Pm}$ $\approx$ 8.9339 ms \\
    \small{Proposed} & 2\textit{$T_{Pm}$} + 5\textit{$T_{Syn}$} + 11\textit{$T_{h}$} $\approx$ 4.5003 \\  \hline
    \end{tabular}
    \label{compcost}
\end{table}

\subsection{\textbf{Communication cost analysis}}
\noindent The communication cost shows the total number of bits transmitted before establishing the key over a channel for authentication. We computed the size of individual parameters (in bits) for communication cost computations. The size of the user identity is 160 bits; the output of the hash operation is 160 bits, the size of the randomly generated nonce is 128 bits, and the size of the time-stamp is 32 bits. We use ECC in the proposed protocol. Each point in the elliptic curve is built up using two coordinates, each with 160 bits; hence, the total size of the point  (Xp, Yp) is 320  bits. The public key size ($UPW*P$)  is 320, and the private key ($UPW$) size is 160 bits as per  ECC computations. Table \ref{commcost} presents a analogous analysis of the communication costs for our protocol with other available protocols for the similar environment.
\begin{table}
   \centering
   \caption{\textbf{Comparison of Communication Cost}}
    \begin{tabular}{p{1.5cm}p{3cm}p{2.5cm}} \hline
    \textbf{Scheme} & \textbf{No of Message} & \textbf{Comm. cost}  \\ \hline
    \small{\cite{14islam}} & 2 & 1184 bits \\ 
    \small{\cite{33Zhang}} & 2 & 1184 bits \\ 
    \small{\cite{23Odelu}} & 2 & 1600 bits \\ 
    \small{\cite{21Liu}} & 2 & 1952 bits \\
    \small{\cite{26Qiu}} & 3 & 1280 bits \\ 
    \small{Proposed} & 2 & 992 bits \\ \hline 
    \end{tabular}
    \label{commcost}
\end{table}
\section{Implementation}
\label{sec:implement}
\noindent For implementation, we used two types of user devices and two types of gateway devices. We used five nodeMCU and five Raspberry-Pis as light-weight user devices and two laptops as resource-capable user devices. As a resource-capable gateway device, we used a laptop with a configuration of 8 GB RAM, Intel (R) Core (TM)i7-5500U CPU, 2.40 GHz, 64 bit, Ubuntu 16.04 operating system. As a lightweight gateway device, we used Raspberry Pi 3 Model B with Quad Core 1.2GHz Broadcom BCM2837 64bit CPU, 1GB RAM with BCM43438 wireless LAN, and Bluetooth Low Energy  (BLE) on board. As a programming language, we used python 3. We used publish-subscribe-based MQTT Protocol as a communication protocol at the application layer, which uses TCP at the transport layer protocol and 6LoWPaN at the network layer.

We implemented all basic elliptic curve operations in python language by connecting Raspberry Pi as a gateway device and the desktop system as a user device. 
\subsection{\textbf{Computed session key}}
\noindent Following Figure \ref{fig : 3}. shows computed session key using proposed work:
\begin{figure}
    \centering
     \includegraphics[width=\linewidth]{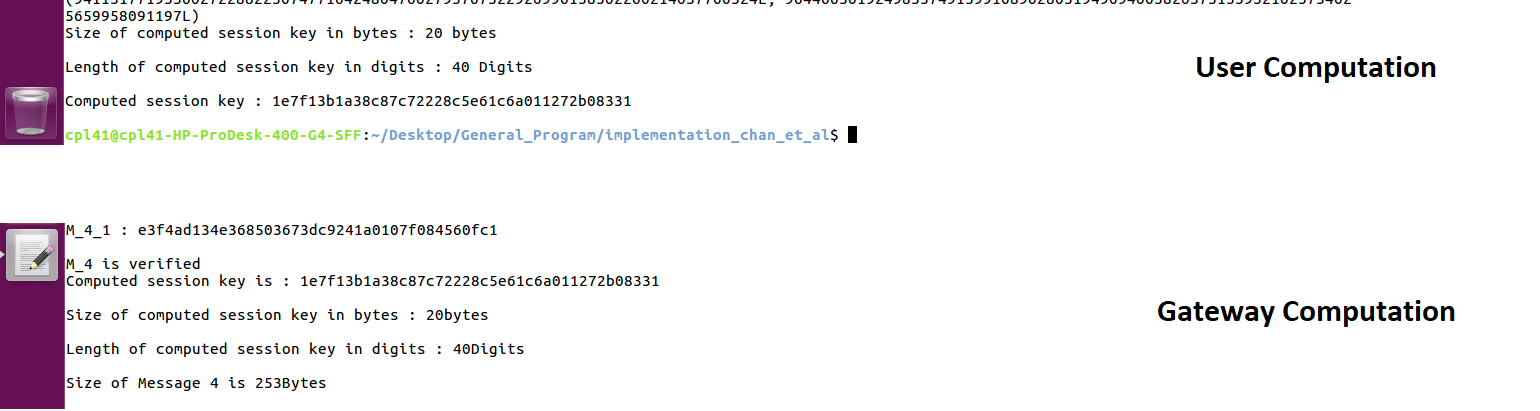}
    \caption{Computed Session key}
    \label{fig : 3}
\end{figure}
\subsection{\textbf{Networking parameters}}
\noindent We computed individual networking parameters, such as round trip delay, packet loss, and throughput. Implementing the proposed scheme using ECC operations and the MQTT protocol reduces the round trip delay, increases throughput, and reduces packet loss for the proposed scheme. We used the widely adopted packet sniffing and packet analyzing tool "Wireshark" to compute the aforementioned parameters. The Wireshark captures MQTT packets and provides all essential networking parameters in an unstructured format; thus, we used python programming to retrieve data more structured way. We interpret these parameters using an automated python script that reads Wireshark data and evokes the required information. 
\section{Conclusion and Future Work}
\label{sec:conclusion}
\noindent For the U-GW-based generic IoT paradigm, this article developed a unique, reliable, and lightweight SC-based remote user authentication technique.
For the discrete logarithm operations, we employed ECC. We performed an informal security study utilising the Dolev-Yao channel and a formal security analysis using the widely used BAN Logic and AVISPA tools for the proposed method. We show that the suggested scheme is very efficient and reliable in terms of communication cost, computation cost, end-to-end delay, packet loss, and throughput for the U-GW system model by comparing it to other current methods. We utilised the Raspberry Pi and the MQTT protocol for implementation. We're also working on a multi-factor model to overcome the restrictions of two-factor authentications and also on U-GW-sensor model with lighter implementation using protocols like light-MQTT and Bluetooth low energy as a future work for the proposed scheme. 

\bibliographystyle{IEEEtran}
\bibliography{mybibfile}
\end{document}